\newtheorem{proposition}{Proposition}
\newtheorem{theorem}{Theorem}
\theoremstyle{definition}
\newcommand{\R}{\mathbb{R}} 
\newcommand{\qq}{\mathcal{Q}} 
\newcommand{\bb}{\mathcal{B}} 
\newcommand{\no}[1]{\left\|#1\right\|} 
\newcommand{\tr}[1]{{\rm tr}\left[#1\right]} 
\newcommand{\id}{\mathbbm{1}} 
\renewcommand{\rho}{\varrho}
\newcommand{\lam}{\lambda}
\renewcommand{\Re}{{\rm Re}\,}
\begin{document}\setlength{\arraycolsep}{2pt}

\title[]{Efficient Pure State Quantum Tomography from Five Orthonormal Bases}

\author{Claudio Carmeli}
\email{claudio.carmeli@gmail.com}
\affiliation{DIME, Universit\`a di Genova, Via Magliotto 2, I-17100 Savona, Italy}

\author{Teiko Heinosaari}
\email{teiko.heinosaari@utu.fi}
\affiliation{Turku Centre for Quantum Physics, Department of Physics and Astronomy, University of Turku, FI-20014 Turku, Finland}

\author{Michael Kech}
\email{kech@ma.tum.de}
\affiliation{Department of Mathematics, Technische Universit\"at M\"unchen, 85748 Garching, Germany}

\author{Jussi Schultz}
\email{jussi.schultz@gmail.com}
\affiliation{Turku Centre for Quantum Physics, Department of Physics and Astronomy, University of Turku, FI-20014 Turku, Finland}

\author{Alessandro Toigo}
\email{alessandro.toigo@polimi.it}
\affiliation{Dipartimento di Matematica, Politecnico di Milano, Piazza Leonardo da Vinci 32, I-20133 Milano, Italy}
\affiliation{I.N.F.N., Sezione di Milano, Via Celoria 16, I-20133 Milano, Italy}

\begin{abstract}
For any finite dimensional Hilbert space, we construct explicitly five orthonormal bases such that the corresponding measurements allow for efficient tomography of an arbitrary pure quantum state. This means that such measurements can be used to distinguish an arbitrary pure state from any other state, pure or mixed, and the pure state can be reconstructed from the outcome distribution in a feasible way.  The set of measurements we construct  is independent of the unknown state, and therefore our results provide a fixed scheme for pure state tomography, as opposed to the adaptive (state dependent) scheme proposed by Goyeneche {\em et al.} in [{\em Phys. Rev. Lett.} {\bf  115}, 090401 (2015)]. We show that our scheme is robust with respect to noise in the sense that any measurement scheme which approximates these measurements well enough is equally suitable for pure state tomography. Finally, we present two convex programs which can be used to reconstruct the unknown pure state from the measurement outcome distributions. 
\end{abstract}

\maketitle

{\em Introduction.---} The aim of quantum tomography is to reconstruct the unknown state of a quantum system by performing suitable measurements on it.
Tomography is a vital routine in quantum information, where it is used to characterize output states and test processing devices.
However, quantum tomography is a consuming task:  in order to obtain enough information for state reconstruction of a $d$-level system, it is necessary to perform measurements of $d+1$ different orthonormal bases, or a generalized measurement with at least $d^2$ outcomes. 
This poor scaling has led to the search for more efficient methods which allow for a reduction of resources in specific cases. 

Recent focus has been on the identification of unknown pure (or more generally low rank) states \cite{HeMaWo13, CaHeScToJPA14, CaHeScTo15, Maetal16, kech, kech2, mondragon2013determination}. Any two pure states can be distinguished with a measurement having just $\sim 4d$  outcomes \cite{HeMaWo13} or, when restricting to projective measurements, with only four orthonormal bases \cite{mondragon2013determination, Jaming14, CaHeScTo15}. The drawback of these approaches is that the measurements they provide cannot distinguish pure states from {\em all} states, implying that one needs to know that the state is pure prior to the measurement in order not to confuse it with mixed states having the same measurement outcome distributions. Moreover, none of the approaches allows an efficient recovery algorithm, mainly since the non-convex nature of the problem renders usual techniques from convex optimization useless.

%

In \cite{Goetal15}, a scheme involving five orthonormal bases along with a reconstruction algorithm was proposed and experimentally demonstrated. Remarkably, such a scheme allows to certify the purity assumption on the state directly from the measurement outcomes. However, this method is adaptive in the sense that the outcome distribution of the first measurement affects the choice of the subsequent ones. 
Therefore, if one requires the procedure to work for all pure states the overall number of required measurement settings is considerably larger than five.

At the cost of a slightly higher number $\mathcal{O}(d\ln d)$ of measurement outcomes, tomographic procedures based on compressed sensing allowing for the stable recovery of pure quantum states were proposed in \cite{gross2010quantum, QCS2, QCS3}.
Rather than providing a functioning measurement setup, these results guarantee that, with high probability, any state can be reconstructed by using sufficiently many randomly drawn measurement settings. From a practical point of view, however, a deterministic approach which provides an explicit measurement set-up may be favourable.

In this Letter we overcome these drawbacks by constructing five orthonormal bases such that every pure state can be efficiently reconstructed from the corresponding measurements. For any dimension $d$, our set of measurements is fixed and therefore there is no need for data processing in between the measurements. We show that these measurements distinguish pure states from all states, and this therefore shows that the scaling $\sim 5d$  in the total number of outcomes is the same as without the constraint of having projective measurements  \cite{CDJJKSZ88}.
More importantly, we prove that the presented set-up is robust with respect to noise.
Finally, we provide reconstruction algorithms for the practical retrieval of the unknown state from the measurement data. We remark that, as compared to the compressed sensing results of \cite{gross2010quantum, QCS2, QCS3}, our result comes with fewer measurement outcomes. However, the stability guarantees that we can derive are weaker. 

%



{\em Construction of the bases.---}  We begin by constructing, for any dimension $d$, five orthonormal bases $\bb^0,\ldots, \bb^4$ which determine any pure state among all states.
This means that for any pure state represented by a unit vector $\psi$, and any density matrix $\varrho$, the equalities
$$
\left\vert \langle v^\ell_j \mid \psi\rangle \right\vert^2 = \langle v^\ell_j \mid \varrho  \, v^\ell_j \rangle \qquad \text{ for all } v^\ell_j\in \bb^\ell
$$ 
imply that $\varrho = \vert \psi\rangle \langle \psi \vert$. The construction is an extension of \cite{Jaming14} where, based on the properties of Hermite polynomials, four orthonormal bases $\bb^1,\ldots, \bb^4$ capable of distinguishing any two pure states were presented. That construction generalizes easily to any sequence of orthogonal polynomials as explained in \cite{CaHeScTo15}. Remarkably, by adding the canonical basis $\bb^0=\{e_0, \ldots, e_{d-1}\}$ to this set, we obtain the five bases  with the desired property. 


To begin with, let us fix a  \emph{sequence of orthogonal polynomials}, that is, a sequence $(p_n)_{n=0}^{\infty}$ of real polynomials such that $p_n$ is of degree $n$ and 
\begin{align*}
\langle p_j,p_i\rangle:=\int_{-\infty}^{\infty} p_j(x)p_i(x)w(x)dx=\delta_{ij}
\end{align*}
for some positive weight function $w$. For a $d$-dimensional system we will only need the first $d+1$ polynomials. To construct the first two bases, let $x_0,\hdots,x_{d-1}$ be the zeros of $p_d$, which are real and distinct numbers satisfying $p_{d-1}(x_j) \neq 0$ for all $j\in\{0,\hdots,d-1\}$ \cite[Section 3.3]{Szego}. Pick an $\alpha\in\mathbb{R}$ such that $e^{i j\alpha}\notin\mathbb{R}$ for all $j\in\{1,\hdots,d-1\}$. Now for $j=0,\hdots,d-1$, set
\begin{align*}
&v^1_j:=\left(p_0(x_j),p_1(x_j),\hdots,p_{d-1}(x_j)\right),\\
&v^2_j:=\left(p_0(x_j),e^{i\alpha}p_1(x_j),\hdots,e^{i(d-1)\alpha}p_{d-1}(x_j)\right)
\end{align*}
and denote $\bb^1 = \{ v^1_j /\Vert v^1_j\Vert\mid j=0,\ldots, d-1 \}$ and $\bb^2 = \{ v^2_j /\Vert v^2_j\Vert \mid j=0,\ldots, d-1\}$. The fact that these are actually orthonormal bases can be readily checked using the Christoffel-Darboux formula \cite[Theorem 3.2.2]{Szego}
\begin{align*}
\sum_{i=0}^n p_i(x)p_i(y)=\frac{k_n}{k_{n+1}} \frac{p_{n+1}(x)p_n(y)-p_n(x)p_{n+1}(y)}{x-y}
\end{align*}
where $k_n$ is the leading coefficient of $p_n$ (see \cite{Jaming14, CaHeScTo15} for more details). This formula evaluated at $n=d-1$ and $x=y=x_j$ also yields the normalization factor
$$
\no{v^1_j}^2 = \no{v^2_j}^2 = \frac{k_{d-1}}{k_d}\, p'_d(x_j)p_{d-1}(x_j) .
$$
For the remaining two bases, let $y_0,\hdots,y_{d-2}$ be the zeros of $p_{d-1}$. As the polynomials $p_d$ and $p_{d-1}$ have no common zeros, the $y_j$:s are distinct from the $x_j$:s. By a similar reason, $p_{d-2}(y_j)\neq 0$ for all $j=0,\hdots,d-2$. For $j=0,\hdots,d-2$ define the non zero vectors
\begin{align*}
&v^3_j:=\left(p_0(y_j),p_1(y_j),\hdots,p_{d-2}(y_j),0\right),\\
&v^4_j:=\left(p_0(y_j),e^{i\alpha}p_1(y_j),\hdots,e^{i(d-2)\alpha}p_{d-2}(y_j),0\right) \, , 
\end{align*}
and by setting $v^3_{d-1}:=e_{d-1}$ as well as $v^4_{d-1}:=e_{d-1}$ we have arrived at the two orthonormal bases $\bb^3 = \{ v^3_j /\Vert v^3_j\Vert\mid j=0,\ldots, d-1 \}$ and $\bb^4 = \{ v^4_j/\Vert v^4_j\Vert \mid j=0,\ldots, d-1\}$. The normalization is now given by
$$
\no{v^3_j}^2 = \no{v^4_j}^2 = \frac{k_{d-2}}{k_{d-1}}\, p'_{d-1}(y_j)p_{d-2}(y_j) .
$$

\begin{theorem}\label{thmpca}
The five orthonormal bases $\bb^0,\ldots, \bb^4$ constructed above determine any pure state among all states.
\end{theorem}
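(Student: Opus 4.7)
The plan is to prove the theorem via a dual-witness argument in the spirit of strict complementarity for semidefinite programs. Set $V = \mathrm{span}\{|v^\ell_j\rangle\langle v^\ell_j| : \ell = 0,\ldots,4,\ j = 0,\ldots,d-1\}$, so that a state $\rho$ satisfies the outcome-matching condition precisely when $\rho - |\psi\rangle\langle\psi| \in V^\perp$ in the Hilbert--Schmidt inner product. The theorem will follow once we exhibit, for each unit vector $\psi$, an operator $W \in V$ with $W\psi = \psi$ and $W$ strictly negative on $\psi^\perp$, so that $Y := |\psi\rangle\langle\psi| - W$ is positive semidefinite, vanishes on $\psi$, and has rank $d-1$.

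Given such a $W$, the conclusion follows from a short computation. For any density matrix $\rho$ matching the outcomes, $\mathrm{tr}(\rho W) = \mathrm{tr}(|\psi\rangle\langle\psi|\,W)$ because $W \in V$, and $\mathrm{tr}(|\psi\rangle\langle\psi|\,W) = 1 - \mathrm{tr}(|\psi\rangle\langle\psi|\,Y) = 1$ since $Y\psi = 0$. Consequently
\[
\langle\psi|\rho|\psi\rangle = \mathrm{tr}(\rho W) + \mathrm{tr}(\rho Y) = 1 + \mathrm{tr}(\rho Y).
\]
Since $|\psi\rangle\langle\psi| \leq I$, the left-hand side is at most $\mathrm{tr}(\rho) = 1$, and $\mathrm{tr}(\rho Y) \geq 0$ because both operators are positive semidefinite. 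Therefore $\mathrm{tr}(\rho Y) = 0$, which for positive semidefinite operators forces $\rho Y = 0$. Hence $\mathrm{range}(\rho) \subseteq \ker Y = \mathrm{span}(\psi)$, and the trace normalisation yields $\rho = |\psi\rangle\langle\psi|$.

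The heart of the proof is the construction of $W$. I would build it in two steps. First, working in the span of the four bases $\bb^1,\ldots,\bb^4$, I would invoke the pure-state reconstruction of \cite{Jaming14,CaHeScTo15} -- which rests on the Christoffel--Darboux identity and on the divisibility of the relevant defect polynomial by $p_d$ or $p_{d-1}$ -- to produce an operator in that span which agrees with $|\psi\rangle\langle\psi|$ up to a diagonal correction in the canonical basis. The canonical basis $\bb^0$ is then exactly what is needed to cancel this diagonal correction via a suitable real linear combination of the projectors $|e_i\rangle\langle e_i|$, so that the final $W \in V$ satisfies $W\psi = \psi$ exactly.

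The delicate point, which I expect to be the main obstacle, is verifying the strict inequality $W|_{\psi^\perp} < 0$, equivalently $\mathrm{rank}(Y) = d-1$. Here the quantitative features of the construction must come into play: the interlacing of the zeros of $p_d$ and $p_{d-1}$, the Christoffel--Darboux identity, and the condition $e^{ij\alpha} \notin \mathbb{R}$ for $1\leq j\leq d-1$ must combine to ensure that no direction in $\psi^\perp$ is missed by the polynomial reconstruction. I also expect this spectral gap to be the source of the noise-robustness of the scheme announced in the abstract: small perturbations of the bases perturb $W$ only slightly and preserve the positive definiteness of $Y$ on $\psi^\perp$.
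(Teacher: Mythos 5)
Your reduction of the theorem to the existence of a dual certificate $W\in V$ with $W\psi=\psi$ and $W|_{\psi^\perp}<0$ is logically sound as far as it goes, but the proof stops exactly where the theorem's content begins: $W$ is never constructed, and you yourself flag the strict negativity on $\psi^\perp$ as an unverified ``main obstacle.'' This is not a routine verification that can be deferred. First, the existence of such an exact rank-$(d-1)$ certificate is a \emph{strictly stronger} property than the injectivity the theorem asserts (it is a strict-complementarity statement, not merely a kernel statement), so even if the five bases do determine pure states among all states, the certificate need not exist. Second, a dimension count makes its existence genuinely doubtful: $\dim V\leq 5d-4$, so the affine slice $\{W\in V: W\psi=\psi\}$ has dimension about $3d-3$, while the operators fixing $\psi$ and negative definite on $\psi^\perp$ form an open cone in a space of dimension $(d-1)^2$; for large $d$ you are asking a very low-dimensional affine subspace to meet that cone, and in the analogous PhaseLift setting with $O(d)$ deterministic measurements exact dual certificates are precisely what one typically cannot produce. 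Your two-step sketch (reconstruct off-diagonal data from $\bb^1,\dots,\bb^4$, then fix the diagonal with $\bb^0$) describes how to hit the affine constraint $W\psi=\psi$, but gives no mechanism at all for the spectral condition.

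The paper's proof avoids duality entirely and is elementary. It uses $\bb^0$ to fix the diagonal $\varrho_{kk}=|c_k|^2$ and, via positivity of $\varrho$, to kill all rows and columns beyond the last index $n$ with $c_n\neq 0$. The probability matching for $\bb^1,\dots,\bb^4$ then says that two polynomials of degree at most $2d-2$, built from the defect $t_{kl}=\varrho_{kl}-c_k\overline{c_l}$, vanish at the $2d-1$ distinct points $x_j,y_j$, hence identically; comparing leading coefficients (using $e^{ij\alpha}\notin\R$) gives $t_{n,r-1}=0$ step by step, and $3\times 3$ positivity minors of $\varrho$ (using $c_n\neq 0$) fill in the remaining entries by induction. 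If you want to salvage your route, you would need to actually exhibit $W$ and prove $\lambda_{\max}(W|_{\psi^\perp})<0$ for every $\psi$, which I expect to be harder than the theorem itself; alternatively, note that the paper derives noise robustness not from a spectral gap of a certificate but from a compactness argument on the set of unit-norm Hermitian matrices with at most one positive eigenvalue.
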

\begin{proof}
Let $\psi = \sum_{j=0}^{d-1} c_j e_j$ be a unit vector  and let $\varrho$ be an arbitrary state such that  $\left\vert \langle v^\ell_j \mid \psi\rangle \right\vert^2 = \langle v^\ell_j \mid \varrho \, v^\ell_j \rangle$ for all $v^\ell_j\in\bb^\ell$. 
From the standard basis $\bb^0$ we get $\varrho_{k,k} =\vert c_k\vert^2$ for all $k$. Let $n$ denote the largest number such that $\varrho_{n,n}=\vert c_n\vert^2\neq 0$ so that  by the positivity of $\varrho$, $\varrho_{k,l} = \varrho_{l,k} = 0$ for all $k > n$.  By the definition of the bases and the equalities of the probabilities we then have
\begin{align}
&\sum_{k,l=0}^{n}( \rho_{k,l}  -c_k \overline{c}_l )p_k (z) p_l (z) =0\label{eqn:pol1}\\
&\sum_{k,l=0}^{n} (\rho_{k,l} - c_k \overline{c}_l) e^{i(l-k)\alpha}p_k (z) p_l (z)=0\label{eqn:pol2}
\end{align}
for all $z=x_j$ and $z=y_j$, but since the polynomials have degree at most  $2n\leq 2d-2$ and they vanish on  $2d-1$ distinct points, they must be identically zero. In other words,  the above equalities must hold for all $z\in\R$. Let us denote $t_{k,l} =  \rho_{k,l}  -c_k \overline{c}_l $ so that $t_{l,k} = \overline{t_{k,l}}$ and $t_{k,k}=0$. By looking at  the highest degree terms in  \eqref{eqn:pol1} and \eqref{eqn:pol2} we get $\Re (t_{n,n-1}) = \Re(e^{-i\alpha}t_{n,n-1})=0$, which imply that $t_{n,n-1}=0$. In other words, the matrix elements of the two states coincide on the diagonal and the bottom right $(d-n+1)\times (d-n+1)$-block. We now proceed by induction.

Firstly,  whenever the two states coincide on some bottom right $(d-r)\times (d-r)$-block, with $1\leq r\leq n-1$, we have $t_{k,l}=0$ for $k\geq r$ and $l\geq r$. But then the highest degree terms in \eqref{eqn:pol1} and \eqref{eqn:pol2} give $\Re (t_{n,r-1}) = \Re(e^{i(r-n-1)\alpha}t_{n,r-1})=0$, which yield $t_{n,r-1}=0$, that is $\varrho_{n,r-1}= c_n \overline{c_{r-1}}$. Secondly, using this and the positivity of $\varrho$ we can calculate for all $r-1 < k < n$
\begin{align*}
0 &\leq \left\vert  \begin{array}{ccc}
\rho_{r-1,r-1} & \rho_{r-1,k} & \rho_{r-1,n}\\
\rho_{k,r-1} & \rho_{k,k} & \rho_{k,n} \\
\rho_{n,r-1} & \rho_{n,k} & \rho_{n,n} 
\end{array} \right\vert
 = 
 \left\vert  \begin{array}{ccc}
\vert c_{r-1}\vert^2 & \rho_{r-1,k} & c_{r-1}\overline{c_{n}} \\
\overline{\rho_{r-1,k}} & \vert c_k\vert^2 & c_k\overline{c_{n}} \\
\overline{c_{r-1}}c_{n} & \overline{c_k}c_{n} & \vert c_{n}\vert^2 
\end{array} \right\vert \\
& = -\vert c_{n}\vert^2 \vert \rho_{r-1,k} - c_{r-1}\overline{c_k}\vert^2
\end{align*}
which, since $c_n\neq 0$,  gives us $\rho_{r-1,k} = c_{r-1}\overline{c_k}$. The two states therefore coincide on a larger  bottom right block. By induction, the states must be equal.

\end{proof}


To give an example of the previously explained construction of five bases, we take the Chebyshev polynomials of the second kind $(U_n)_{n=0}^{\infty}$.
These are the unique polynomials such that \cite[p.~3]{Szego}
\begin{align*}
U_n(\cos\theta)=\frac{\sin\left((n+1)\theta\right)}{\sin\theta}
\end{align*}
holds for all $n=0,1,\ldots$ and $\theta\in [0,2\pi)$. 
The $n$ roots of $U_n$ are given by
\begin{align*}
\cos\left(\frac{j+1}{n+1}\,\pi\right),\ j=0,\hdots,n-1 \, , 
\end{align*}
and its leading coefficient is $k_n = 2^n$. 
Hence, the normalized vectors of the first basis are given by
$$
v_j^1=\sqrt{\frac{2}{d+1}}\left(\sin\left(1\,\frac{j+1}{d+1}\,\pi\right),\hdots,\sin\left(d\,\frac{j+1}{d+1}\,\pi\right)\right) \, , 
$$
and the other bases have similar and equally simple forms.



{\em More general measurements and stability.---} A realistic measurement is affected by noise and therefore cannot be described simply by an orthonormal basis.
Even more, an optimal measurement for a given task might not even be related to an orthonormal basis.
For these reasons, one needs to have a wider mathematical framework for measurements.
A general measurement in quantum mechanics can be modelled by a positive operator valued measure (POVM) \cite{MLQT12}, which  is a function $j\mapsto P(j)$ from a finite set of measurement outcomes $\{1,\ldots,m\}$ to the linear space of $d\times d$ Hermitian matrices $H(d)$ such that $P(j)\geq 0$ and $\sum_{j=1}^m P(j)=\id$. In practice one might want to measure more than one POVM. 
For instance, a noisy measurement of each orthonormal basis can be described by a separate POVM. 
By a {\em measurement scheme} we mean a set $\qq:=\{P_1,\hdots,P_l\}$ of POVMs.
It is not restrictive to assume that all POVMs in a given measurement scheme have the same set of outcomes $\{1,\ldots,m\}$. 
A measurement scheme $\qq$ therefore induces a linear map $M_\qq$ from the real vector space $H(d)$ to the set of real $l\times m$ matrices  $M_{lm}(\R)$ via 
\begin{align*}
M_\qq (X)_{i,j} = \tr{X P_i(j)}.
\end{align*}
The image of a state $\varrho$ is the real matrix whose $i$-th row contains the outcome probabilities corresponding to $P_i$.  Analogously to the case of projective measurements, we say that the measurement scheme $\qq$ determines any pure state among all states if for any pure state $\sigma=\vert \psi\rangle\langle \psi \vert $ and any state $\varrho$, the equality $M_\qq(\sigma) = M_\qq (\varrho)$ implies $\varrho = \sigma$.  By adapting the argument of \cite[Theorem 1]{CaHeScToJPA14}, it is easy to see that  a measurement scheme $\qq$ has this property if and only if every non-zero element of $\ker M_\qq$ (the kernel of the map $M_\qq$) has at least two positive eigenvalues (see the Supplemental Material for the detailed proof).

%
%
%
%

With this framework of measurement schemes we are now prepared to discuss the noise robustness of the result stated in Theorem \ref{thmpca}. 
First, we will need to have a notion of closeness of two measurement schemes, and for this reason we fix norms on the real vector spaces $H(d)$ and $M_{lm}(\R)$.
Since these are finite dimensional vector spaces, all norms are equivalent and the choice is not important for our purposes.
Typical choices are, e.g., the trace norm $\no{X}=\tr{|X|}$ on $H(d)$, and on $M_{lm}(\R)$ the supremum of the $\ell^1$-norm over all lines, i.e., 
\begin{align*}
\no{M} = \sup_{i} \sum_j |M_{i,j}| \, . 
\end{align*}
The inequality $\no{M_\qq(\varrho) - M_{\qq'}(\varrho)}\leq\epsilon$ then means that the measurement outcome distributions of all the POVMs in $\qq$ and $\qq'$ measured on the same state $\varrho$ are uniformly close in the total variation norm.
We will say that two measurement schemes $\qq$ and $\qq'$ are {\em $\epsilon$-close} if $\no{M_\qq-M_{\qq'}}_\infty < \epsilon$, where $\no{\cdot}_\infty$ is the uniform operator norm in the chosen norms of $H(d)$ and $M_{lm}(\R)$.


%
%


\begin{theorem}[Stability]\label{propstab}
If a measurement scheme $\qq$ determines any pure state among all states, then there is an $\epsilon>0$ such that every measurement scheme $\qq'$ which is $\epsilon$-close to $\qq$ has this same property. 
\end{theorem}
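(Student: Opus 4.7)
The plan is to argue by contradiction using the kernel criterion recalled just before the statement: a measurement scheme $\qq$ determines pure states among all states if and only if $\ker M_\qq$ contains no non-zero Hermitian matrix with at most one positive eigenvalue. Write $\mathcal{N}\subset H(d)$ for the cone of such matrices; the hypothesis on $\qq$ becomes $\ker M_\qq\cap\mathcal{N}=\{0\}$.

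The first step I would prove is that $\mathcal{N}$ is closed in $H(d)$. Ordering the eigenvalues $\lambda_1(X)\geq\ldots\geq\lambda_d(X)$, each $\lambda_k(X)$ depends continuously on $X$ (by Weyl's inequalities), and $\mathcal{N}=\{X\in H(d):\lambda_2(X)\leq 0\}$. This is the only non-routine ingredient; once it is in hand, the rest is finite-dimensional compactness.

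Next I would set up the contradiction. If no such $\epsilon$ exists, I can select a sequence of measurement schemes $\qq_n$ with $\no{M_{\qq_n}-M_\qq}_\infty\to 0$, each violating the pure-state determination property. The kernel criterion then produces non-zero $X_n\in\ker M_{\qq_n}\cap\mathcal{N}$, which I rescale to $\no{X_n}=1$ using that $\mathcal{N}$ is a cone. Compactness of the unit sphere of $H(d)$ lets me pass to a subsequence $X_n\to X$ with $\no{X}=1$, and closedness of $\mathcal{N}$ forces $X\in\mathcal{N}$.

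It remains to show $X\in\ker M_\qq$: a one-line triangle inequality
\[
\no{M_\qq(X)}\leq\no{M_\qq}_\infty\no{X-X_n}+\no{M_\qq-M_{\qq_n}}_\infty\no{X_n},
\]
together with $M_{\qq_n}(X_n)=0$, $X_n\to X$ and $\no{M_{\qq_n}-M_\qq}_\infty\to 0$, gives $M_\qq(X)=0$. Thus $X$ is a non-zero element of $\ker M_\qq\cap\mathcal{N}$, contradicting the hypothesis on $\qq$. The main (mild) obstacle is the closedness of $\mathcal{N}$; everything else is standard compactness and continuity, and finite-dimensionality of $H(d)$ is essential in both places — which matches the paper's remark that the specific choice of norms is irrelevant.
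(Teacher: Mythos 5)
Your proof is correct, and it rests on the same two ingredients as the paper's: the kernel criterion (every non-zero element of $\ker M_\qq$ has at least two positive eigenvalues) and the continuity of the ordered eigenvalues via Weyl's perturbation theorem, which makes the set of unit-norm matrices with $\lambda_2\leq 0$ compact. The only real difference is logical organization: you argue by contradiction with a sequence $\qq_n\to\qq$ and extract a convergent subsequence of normalized kernel elements, whereas the paper argues directly by showing that $\qq$ determines pure states if and only if $c:=\min_{X\in K}\no{M_\qq(X)}>0$ on the compact set $K=\{X:\lambda_2(X)\leq 0,\ \no{X}=1\}$, and then observes that any $\qq'$ with $\no{M_\qq-M_{\qq'}}_\infty<c$ inherits a positive minimum. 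The direct version buys something concrete: an explicit admissible threshold $\epsilon=c$, and the same constant $c$ reappears as $c_\qq$ in the proofs of the stable recovery theorems, where it controls the reconstruction error via $C_\qq=2/c_\qq$. Your compactness-by-contradiction argument establishes existence of $\epsilon$ but produces no such quantity; if you want the stability radius or the recovery constants, you would still need to run the minimization over $K$ afterwards.
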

\begin{proof}
For $i\in\{1,\hdots,d\}$, denote by $\lambda_i(X)$ the $i$-th greatest eigenvalue of a Hermitian matrix $X\in H(d)$. Let $K:=\{X\in H(d):\lambda_2(X)\leq 0, \no{X}=1\}$ be the set of unit norm Hermitian matrices with at most one positive eigenvalue. Consider the map $\phi:H(d)\to\mathbb{R}^d$,
\begin{align*}
\phi(X) =  (\lambda_1(X),-\lambda_2(X),\hdots,-\lambda_d(X))
\end{align*}
and let $L:=[0,+\infty)^d$. We have $K=\phi^{-1}(L)\cap H(d)_1$, where $H(d)_1$ is the unit sphere in $H(d)$. Since $H(d)_1$ is compact, $L$ is closed, and $\phi$ is continuous by Weyl's perturbation theorem \cite[Corollary III.2.6]{Bhatia}, we conclude that $K$ is a compact set.\\
We claim that a measurement scheme $\qq$ determines pure states among all states if and only if $c:=\min_{X\in K}\no{M_\qq(X)}>0$. 
First, assume $c>0$, and let $X\neq 0$ be such that $\lam_2(X)\leq 0$. We have $X/\no{X} \in K$, hence
$$
\no{M_\qq(X)} = \no{X}\no{M_\qq\left(\frac{X}{\no{X}}\right)} \geq c\no{X} \neq 0 ,
$$
that is, $X\notin\ker M_\qq$. Therefore, $\qq$ determines any pure state among all states. Conversely, suppose that  $\qq$ has the latter property. By the compactness of $K$, there is $X\in K$ such that $c=\|M_\qq(X)\|$. Since every non-zero element of $\ker M_\qq$ has at least two positive eigenvalues, we have  $M_\qq(X)\neq 0$ and thus $c\neq 0$.

Finally, if $\no{M_\qq-M_{\qq'}}_\infty < \epsilon$, then
\begin{align*}
&\min_{X\in K}\|M_{\qq'}(X)\|\geq\min_{X\in K}(\|(M_\qq(X)\| \\
& \qquad \qquad \qquad \qquad \qquad -\|(M_\qq-M_{\qq'})(X)\|) \geq c - \epsilon .
\end{align*}
Hence, for any $\epsilon < c$, the measurement scheme $\qq'$ determines any pure state among all states.
\end{proof}


{\em Pure state quantum tomography.---} The most notable practical feature of  measurement schemes that determine pure states among all states is that they allow for a computationally efficient tomography of pure quantum states (see \cite{kech3}). Essentially, this is due to the fact that for every pure state $\sigma$, the unique solution to the feasibility problem 
\begin{gather*}
\text{find}\ X\\
\text{subject to}\ X\geq 0,\, M_\qq(X)=M_\qq(\sigma)
\end{gather*}
is given by $\sigma$. Note that, since $\tr{X} = \sum_j M_\qq(X)_{i,j}$, the constraints imply that any solution is a state.

In practice, the state $\sigma$ might not be pure, but just well approximated by a pure state,  the measurement $M_\qq(\sigma)$ might be affected by systematic errors and furthermore there is statistical noise. Because of that, in a realistic scenario, one has to reconstruct $\sigma$ from the perturbed measurement data $b:=M_\qq(\sigma)+f$, where $f\in M_{lm}(\R)$ is a small error term capturing all of these sources of error. In the remainder of this section we present two convex optimization problems which allow for a recovery of any pure state $\sigma$ from the noisy measurement data $b$ provided that the measurement scheme $\qq$ determines pure states among all states.

First, consider the well-known \cite{recht2010guaranteed} semi-definite program
\begin{gather}\label{sdp2}
\begin{split}
\text{minimize}\ \text{tr}(Y)\ \ \ \ \ \ \ \ \ \ \ \ \\
\text{subject to}\ Y\geq 0,\ \|M_\qq(Y)-b\|\leq\epsilon,
\end{split}
\end{gather}
where $\epsilon>0$ is an error scale which has to be fixed in advance. 
Then, as an immediate consequence of \cite[Theorem IV.1]{kech3}, we get the following recovery result (a concise proof is reported in the Supplemental Material).
\begin{theorem}[Stable Recovery I]\label{thmstab1}
Let $\epsilon>0$. There is a constant $C_\qq>0$ independent of $\epsilon$ such that for all pure states  $\sigma$ and all error terms $f\in M_{lm}(\R)$ with $\|f\|\leq\epsilon$, any minimizer $Y^*$ of \eqref{sdp2} satisfies
\begin{align*}
\|Y^*-\sigma\|\leq C_\qq\epsilon.
\end{align*}
\end{theorem}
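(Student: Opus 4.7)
The plan is to leverage the geometric characterization established in the proof of Theorem \ref{propstab}. Set $\Delta := Y^* - \sigma$. Since $\sigma\geq 0$ and $\|M_\qq(\sigma) - b\| = \|f\| \leq \epsilon$, the state $\sigma$ itself is feasible for the program \eqref{sdp2}; the minimizer $Y^*$ is feasible by definition. A triangle inequality then yields $\|M_\qq(\Delta)\| \leq \|M_\qq(Y^*) - b\| + \|b - M_\qq(\sigma)\| \leq 2\epsilon$.

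Next I would exploit the constraint $Y^* \geq 0$ to pin down the sign structure of $\Delta$. Writing $\sigma = |\psi\rangle\langle\psi|$, the inequality $\Delta \geq -\sigma$ holds, so for every vector $|\phi\rangle$ orthogonal to $|\psi\rangle$ we obtain $\langle\phi|\Delta|\phi\rangle \geq 0$. Thus $\Delta$ is positive semi-definite on the $(d-1)$-dimensional subspace $|\psi\rangle^\perp$, and the minimax principle forces it to have at most one negative eigenvalue. Equivalently, $\lambda_2(-\Delta) \leq 0$, which (provided $\Delta\neq 0$) places the normalized matrix $-\Delta/\|\Delta\|$ inside the compact set $K$ introduced in the proof of Theorem \ref{propstab}.

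The conclusion is then immediate from the quantity $c := \min_{X\in K}\|M_\qq(X)\|$, which is strictly positive precisely because $\qq$ determines any pure state among all states (as shown in that same proof). By linearity, $\|M_\qq(\Delta)\| = \|\Delta\|\cdot\|M_\qq(-\Delta/\|\Delta\|)\| \geq c\|\Delta\|$, and combining with the bound $\|M_\qq(\Delta)\| \leq 2\epsilon$ gives $\|\Delta\| \leq (2/c)\epsilon$. One may therefore take $C_\qq := 2/c$, which depends only on $\qq$ and, in particular, is independent of $\epsilon$.

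The only non-routine step is the observation that $Y^* - \sigma$ is automatically contained (up to normalization) in the cone $K$ appearing in Theorem \ref{propstab}; given this, the rest is simply invoking the constant $c$ already extracted there. It is worth noting that the argument uses only the feasibility of $Y^*$, not the minimum-trace property: trace minimization in \eqref{sdp2} serves merely to single out a canonical solution and, in the noiseless limit $\epsilon\to 0$, to force the solution to be the rank-one state $\sigma$ itself.
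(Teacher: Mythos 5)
Your proof is correct and follows essentially the same route as the paper's: feasibility of both $\sigma$ and $Y^*$ gives $\|M_\qq(Y^*-\sigma)\|\leq 2\epsilon$, the normalized difference is placed in the compact set $K$ from the proof of Theorem~\ref{propstab}, and the strictly positive minimum $c=\min_{X\in K}\|M_\qq(X)\|$ yields $C_\qq=2/c$. The only (immaterial) difference is that you establish $\lambda_2(-\Delta)\leq 0$ directly via the minimax principle on $\psi^\perp$, where the paper invokes Weyl's inequalities; your closing remark that only feasibility, not trace minimization, is used is also consistent with the paper, whose argument covers both programs \eqref{sdp2} and \eqref{sdp1} at once.
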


Secondly, consider the following convex program, which was also proposed in \cite{kabanava2015stable}.
\begin{gather}\label{sdp1}
\begin{split}
\ \ \ \ \ \text{minimize}\ \|M_\qq(Y)-b\|_2\\
\text{subject to}\ Y\geq 0.\ \ \ \ \ 
\end{split}
\end{gather}
Note that, different from the program \eqref{sdp2}, there is no need to guess an error scale $\epsilon$ in advance, which might be desirable from a practical point of view. The following result is then an immediate consequence of \cite[Lemma V.5]{kech3} (see also the Supplemental Material). 
\begin{theorem}[Stable Recovery II]\label{thmstab2}
Let $\epsilon>0$. There is a constant $C_\qq>0$ independent of $\epsilon$ such that for all pure states  $\sigma$ and all error terms $f\in M_{lm}(\R)$ with $\|f\|\leq\epsilon$, any minimizer $Y^*$ of \eqref{sdp1} satisfies
\begin{align*}
\|Y^*-\sigma\|\leq C_\qq\epsilon.
\end{align*}
\end{theorem}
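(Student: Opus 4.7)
The plan is to mirror the proof of Theorem~\ref{thmstab1}, replacing the explicit feasibility constraint of program~(\ref{sdp2}) with the optimality principle for~(\ref{sdp1}). The key initial observation is that $\sigma$ itself is a feasible point for~(\ref{sdp1}) with objective value $\|M_\qq(\sigma) - b\|_2 = \|f\|_2$, so any minimizer $Y^*$ must satisfy $\|M_\qq(Y^*) - b\|_2 \leq \|f\|_2$. Combining this with the triangle inequality and the fact that all norms on the finite-dimensional space $M_{lm}(\R)$ are equivalent, I obtain a bound of the form $\|M_\qq(Y^* - \sigma)\| \leq C_1 \epsilon$, for some constant $C_1 > 0$ depending only on the chosen norms.

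The crucial structural fact to exploit next is that the Hermitian matrix $X := \sigma - Y^*$ has at most one positive eigenvalue, that is, $\lambda_2(X) \leq 0$. Writing $\sigma = |\psi\rangle\langle\psi|$ and observing that for every unit vector $v \in \psi^\perp$ one has $\langle v, X v\rangle = -\langle v, Y^* v\rangle \leq 0$, the Courant--Fischer min-max principle applied to the $(d-1)$-dimensional subspace $\psi^\perp$ immediately yields this inequality. Consequently, assuming $X \neq 0$ (the case $X = 0$ being trivial), the normalized matrix $X/\|X\|$ belongs to the compact set $K$ introduced in the proof of Theorem~\ref{propstab}.

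I then invoke the quantitative injectivity estimate from that proof: since $\qq$ determines any pure state among all states, the constant $c := \min_{X' \in K} \|M_\qq(X')\|$ is strictly positive, and hence $\|M_\qq(X)\| \geq c \|X\|$. Combining with the first step,
\[
\|Y^* - \sigma\| = \|X\| \leq \frac{\|M_\qq(X)\|}{c} \leq \frac{C_1}{c}\,\epsilon ,
\]
so the claim holds with $C_\qq := C_1 / c$.

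The main obstacle, though technically mild, is the eigenvalue inequality $\lambda_2(\sigma - Y^*) \leq 0$ in the second step. This is the crucial place where both the rank-one nature of the pure state $\sigma$ and the positivity constraint $Y \geq 0$ in the convex program enter jointly, providing the link to the set $K$ on which the stability constant $c$ of Theorem~\ref{propstab} is computed. Without it, there would be no reason for $\sigma - Y^*$ to lie in the region where $M_\qq$ is bounded below, and the stability bound would fail. All remaining ingredients are routine norm manipulations together with a direct reuse of the compactness argument already established in the proof of Theorem~\ref{propstab}.
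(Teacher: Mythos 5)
Your proposal is correct and follows essentially the same route as the paper's proof: bound $\|M_\qq(Y^*-\sigma)\|$ by a multiple of $\epsilon$ using the feasibility of $\sigma$ for \eqref{sdp1}, place $(\sigma-Y^*)/\|\sigma-Y^*\|$ in the compact set $K$ from the proof of Theorem~\ref{propstab}, and divide by the positive minimum $c$ of $\|M_\qq(\cdot)\|$ on $K$. The only cosmetic differences are that you prove $\lambda_2(\sigma-Y^*)\leq 0$ directly via Courant--Fischer on $\psi^\perp$ where the paper cites Weyl's inequalities, and that you explicitly invoke norm equivalence to reconcile the $\|\cdot\|_2$ objective of \eqref{sdp1} with the generic norm in the statement, a point the paper glosses over.
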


Note that in both Theorems \ref{thmstab1} and \ref{thmstab2}, the constant $C_\qq$ appearing in the stability bound might depend on all the parameters of $\qq$. We do not know how to estimate $C_\qq$ and hence we cannot make our stability results more explicit. Therefore, we have to rely on numerical simulations to evaluate whether the measurement schemes we constructed perform well enough in practise.


{\em Numerical results.---} 
For our simulations we choose the measurement schemes constructed from the Chebyshev polynomials of the second kind $(U_n)_{n=0}^\infty$. Moreover, we choose $\alpha=\pi/d$ and we use the Hilbert-Schmidt norm $\no{\cdot}_2$ on both $H(d)$ and $M_{lm}(\R)$. For dimensions $d=10,20,\hdots,60$, we ran the semi-definite program \eqref{sdp2} for $10^5$ times, where we sampled the pure states  and error terms $f\in M_{lm}(\R)$ with $\|f\|_2=\epsilon$ independently according to the respective Haar measures. The error scale was set to $\epsilon=10^{-4}$.

Figure \ref{fig2} shows the empiric probability density function of the reconstruction error for the dimensions $d=10,30,50$. In all cases the distribution appears to be well located, indicating a good reconstruction for most signals.
\begin{figure}[ht]
\centering
\includegraphics[width=8.5cm]{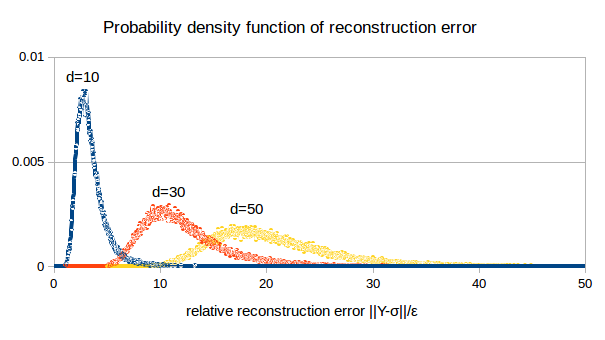}
\caption{\label{fig2} (Color online) The empiric probability density function of the relative reconstruction error $\|Y^*-\sigma\|_2/\epsilon$ for dimensions $d=10,30,50$, where $Y^*$ is the minimizer of the semi-definite program \eqref{sdp2}.}
\end{figure}

Figure \ref{fig1} shows the empiric $96\%$,$99\%$ and $99.75\%$ quantiles of the reconstruction error as well as its arithmetic mean. In the selected range of dimension the $99.75\%$ quantile error does not exceed $60\epsilon$. This suggests that for most signals the reconstruction is feasible. Furthermore, all quantiles appear to scale sublinearly with the dimension.
\begin{figure}[ht]
\centering
\includegraphics[width=8.5cm]{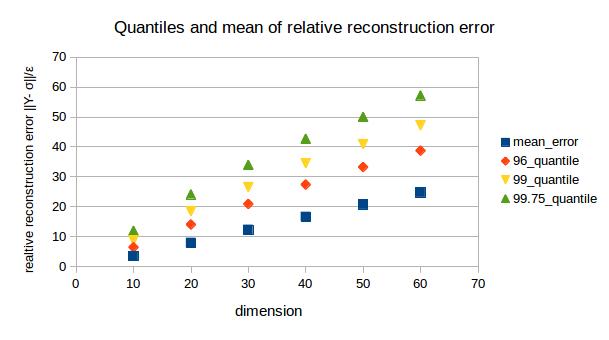}
\caption{\label{fig1} (Color online) The empiric $96\%$, $99\%$ and $99,75\%$ quantiles as well as the mean of the relative reconstruction error $\|Y^*-\sigma\|_2/\epsilon$ for dimensions $d=10,20,\hdots,60$, where $Y^*$ is the minimizer of the semi-definite program \eqref{sdp2}.}
\end{figure}



%
%
%
%



\newpage

\onecolumngrid

\clearpage

\section*{Supplemental Material}

For the reader's convenience, we provide the proofs of the two facts referred to in the main text as straightforward adaptations of known results. The first one is the following characterization of measurement schemes determining any pure state among all states (see \cite[Theorem 1]{CaHeScToJPA14} for a similar proof in the case of a single POVM).

\begin{proposition}\label{prop:PCA}
A measurement scheme $\qq$ determines pure states among all states if and only if every non-zero element of $\ker M_\qq$ has at least two positive eigenvalues.
\end{proposition}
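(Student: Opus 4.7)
The plan is to prove the two implications separately, treating the condition as a condition on the kernel $\ker M_\qq$ viewed as a linear subspace of $H(d)$. A basic preliminary observation I will use throughout is that every $X\in\ker M_\qq$ satisfies $\tr{X}=0$: since $\sum_j P_i(j)=\id$ for every $i$, one has $\tr{X}=\sum_j\tr{XP_i(j)}=\sum_j M_\qq(X)_{i,j}=0$.

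For the ``if'' direction I would argue by contrapositive. Assume $\qq$ does not determine pure states among all states, so there exist a pure $\sigma=\ket{\psi}\bra{\psi}$ and a state $\varrho\neq\sigma$ with $M_\qq(\sigma)=M_\qq(\varrho)$. Set $X=\sigma-\varrho$; then $X$ is a nonzero element of $\ker M_\qq$, and the key point is that $X$ has at most one positive eigenvalue. Indeed, on the $(d-1)$-dimensional subspace $\psi^\perp$ one has $\langle v|X|v\rangle=-\langle v|\varrho|v\rangle\le 0$ for every $v\in\psi^\perp$, so by the min--max characterization
\begin{align*}
\lambda_2(X)\le\max_{v\in\psi^\perp,\,\|v\|=1}\langle v|X|v\rangle\le 0,
\end{align*}
contradicting the assumption that every nonzero element of $\ker M_\qq$ has at least two positive eigenvalues.

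For the ``only if'' direction I would again argue by contrapositive. Suppose $X\in\ker M_\qq$ is nonzero with at most one positive eigenvalue, and write the Jordan decomposition $X=X_+-X_-$ with $X_\pm\ge 0$ of orthogonal support. The hypothesis on eigenvalues forces $X_+$ to have rank at most $1$, while the trace identity $\tr{X}=0$ gives $\tr{X_+}=\tr{X_-}=:c$. Since $X\neq 0$, one has $c>0$, so $X_+=c\ket{\psi}\bra{\psi}$ for a unit vector $\psi$, and $X_-$ is a nonzero positive operator orthogonal to $\ket{\psi}$. I then set $\sigma:=\ket{\psi}\bra{\psi}$ and $\varrho:=X_-/c$. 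By construction, $\sigma$ is pure, $\varrho$ is a state, $\sigma\neq\varrho$ (they have orthogonal supports), and $\sigma-\varrho=X/c\in\ker M_\qq$, so $M_\qq(\sigma)=M_\qq(\varrho)$, showing that $\qq$ fails to determine pure states among all states.

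No step looks like a serious obstacle: the whole argument rests on the elementary but essential min--max inequality $\lambda_2(\sigma-\varrho)\le 0$ for any pure $\sigma$ and any state $\varrho$, together with the trace-zero property of kernel elements to ensure that the positive and negative parts of $X$ balance. The only thing I would be careful about is checking that $X_+\neq 0$ and $X_-\neq 0$ (which both follow from $X\neq 0$ and $\tr{X}=0$), so that the constructed $\sigma$ and $\varrho$ are genuinely distinct legitimate states.
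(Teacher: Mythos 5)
Your proof is correct and takes essentially the same route as the paper's: both directions hinge on the fact that a difference (pure state) $-$ (state) has at most one positive eigenvalue, and your converse uses the identical construction $\sigma=X_+/\tr{X_+}$, $\varrho=X_-/\tr{X_+}$ from the Jordan decomposition of a trace-zero kernel element. The only cosmetic difference is that you establish $\lambda_2(\sigma-\varrho)\le 0$ directly via the Courant--Fischer min--max principle on $\psi^\perp$, whereas the paper cites Weyl's inequalities for the same bound.
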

\begin{proof}
The measurement scheme $\qq$ does not determine pure states among all states if and only if $M_\qq(\sigma-\varrho) = 0$ for some states $\sigma$ and $\varrho$ such that $\sigma$ is pure and $\sigma-\varrho\neq 0$. This implies that $\sigma-\varrho\in\ker M_\qq$, and $\sigma-\varrho$ has at most one positive eigenvalue by Weyl's inequalities \cite[Theorem III.2.1]{Bhatia}. Conversely, if $X\in\ker M_\qq$ is non-zero and has at most one positive eigenvalue, then it has exactly one positive eigenvalue since $\tr{X} = \sum_j M_\qq(X)_{i,j} = 0$. Hence, its positive part $X_+$ has rank $1$. Defining the states $\sigma = X_+/\tr{X_+}$ and $\varrho = (X_+-X)/\tr{X_+}$, we have that $\sigma$ is pure, $\sigma-\varrho=X/\tr{X_+}\neq 0$ and $M_\qq(\sigma-\varrho) = M_\qq(X)/\tr{X_+} = 0$.
\end{proof}

The second proof we report here is the proof of the Stable Recovery theorems, which are immediate consequences of \cite[Theorem VII.2 and Lemma V.5]{kech3}.

\begin{proof}[Proof of Theorems \ref{thmstab1} and \ref{thmstab2}.]
Note that both for the optimization problem \eqref{sdp2} and \eqref{sdp1} the minimizer $Y^*$ satisfies $\|M_\qq(Y^*)-M_\qq(\sigma)-f\|\leq\epsilon$. Hence, in both cases we find
\begin{align}
\epsilon&\geq \|M_\qq(Y^*)-M_\qq(\sigma)-f\|\nonumber \geq \|M_\qq(Y^*-\sigma)\|-\|f\|\nonumber\\
&\geq \|Y^*-\sigma\|\mathop{\inf_{X,\sigma^\prime\geq 0, \ X\neq\sigma^\prime}}_{{\rm rank}\,\sigma'=1}\frac{\|M_\qq(\sigma^\prime-X)\|}{\|\sigma^\prime-X\|}-\epsilon. \label{eqeps}
\end{align}
By Weyl's inequalities,
\begin{align*}
\left\{\frac{\sigma^\prime-X}{\no{\sigma^\prime-X}} : \ X,\sigma^\prime\geq 0, \ X\neq\sigma^\prime \text{ and } {\rm rank}\,\sigma'=1 \right\} \subseteq K := \{X'\in H(d):\lambda_2(X')\leq 0,\ \|X'\|=1\}
\end{align*}
(actually, it is easy to see that the two sets are equal). By the argument in the proof of Theorem \ref{propstab}, the set $K$ is compact. Since the measurement scheme $\qq$ determines pure states among all states, we have $M_\qq(X') \neq 0$ for all $X'\in K$ by Proposition \ref{prop:PCA}, and hence
\begin{align*}
c_\qq:=\min_{X'\in K} \|M_\qq(X')\| > 0 .
\end{align*}
This, together with \eqref{eqeps}, implies
\begin{align*}
\|Y^*-\sigma\|_2\leq \frac{2}{c_\qq}\epsilon
\end{align*}
and hence we can choose $C_\qq=2/c_\qq$.
\end{proof}
\end{document}